\newcommand{\SAT}{{\tt SAT}}
\newcommand{\cL}{{\cal L}}
\newcommand{\cLM}{{\cal L}^M}
\newcommand{\cWm}{{\cal W}^m}
\newcommand{\cW}{{\cal W}}
\newcommand{\cX}{{\cal X}}
\newcommand{\cF}{{\cal F}}
\newcommand{\OTR}[2]{$({#1}, {#2})$-{\tt Trade}}
\newcommand{\TR}[3]{$({#1}, {#2}, {#3})$-trade}
\newcommand{\SetS}{{\tt SetSplitting}}
\newcommand{\NP}{{\tt NP}}
\newcommand{\coNP}{{\tt coNP}}
\newcommand{\OMIT}[1]{}
\newtheorem{theorem}{Theorem}[section]
\newtheorem{definition}[theorem]{Definition}
\newtheorem{proposition}[theorem]{Proposition}
\newtheorem{example}[theorem]{Example}
\newtheorem{conjecture}[theorem]{Conjecture}
\begin{document}

\title{On the Complexity of Exchanging\footnote{This work is partially supported by grant 2014SGR1034 (ALBCOM) of ``Generalitat de Catalunya''.}}

\author{X.~Molinero~\thanks{Department of Applied Mathematics III. Universitat Polit\`ecnica de Catalunya, Manresa, Spain. E-mail: {\tt xavier.molinero@upc.edu}. Partially funded by grant MTM2012-34426/FEDER of the ``Spanish Economy and Competitiveness Ministry''.}
	\and
	M.~Olsen~\thanks{AU Herning, Aarhus University, Denmark. E-mail: {\tt martino@auhe.au.dk}}
	\and
	M.~Serna~\thanks{Department of Computer Science, Universitat Polit\`ecnica de Catalunya, Barcelona, Spain. E-mail: {\tt mjserna@cs.upc.edu}. Partially funded  by grant TIN2013--46181-C2-1-R (COMMAS)  of  the ``Ministerio de Econom\'{\i}a y Competitividad''.}}

\maketitle

\begin{abstract}
We analyze the computational complexity of the problem of deciding whether, for a given simple game, there exists the possibility of rearranging the participants in a set of $j$ given losing coalitions into a set of $j$ winning coalitions. We also look at the problem of turning winning coalitions into losing coalitions. We analyze the problem when the simple game is represented by a list of wining, losing, minimal winning or maximal loosing coalitions.
\newline

\noindent{\bf Keywords}: Tradeness of Simple Games, Computational Complexity

\end{abstract}

\section{Introduction}

Simple games cover voting systems in which a single alternative, such as a bill or an amendment,
is pitted against the status quo. In these systems, each voter responds with a vote of yea and nay.
Democratic societies and international organizations use a wide variety of complex rules to reach
decisions. Examples, where it is not always easy to understand the consequences of the way
voting is done, include the Electoral College to elect the President of the United States,
the United Nations Security Council, the governance structure of the World Bank, the International Monetary Fund, the
European Union Council of Ministers, the national governments of many countries, the councils in several counties, and the system to elect the major in cities
or villages of many countries. Another source of examples comes from economic
enterprises whose owners are shareholders of the society and divide profits or losses
proportionally to the numbers of stocks they posses, but make decisions by voting
according to a pre-defined rule (i.e., an absolute majority rule or a qualified
majority rule).
See~\cite{Tay95,TaZw99} for a thorough presentation of theses and other examples.
Such systems have been analyzed as \emph{simple games}.


\begin{definition}A \emph{simple game} $\Gamma$ is a pair $(N,\cW)$ in
which $N = \{1,2,\dots,n\}$ and $\mathcal W$ is a collection of
subsets of $N$ that satisfies: \emph{(1)} $N \in \mathcal W$,
\emph{(2)} $\emptyset \notin \mathcal W$ and \emph{(3)} the
\emph{monotonicity} property: $S \in \mathcal W$ and $S \subseteq T
\subseteq N$ implies $T \in \mathcal W$.
\end{definition}

The subsets of $N$ are called \emph{coalitions},
the coalitions in $\cW$ are called \emph{winning coalitions},
and the coalitions that are not winning are called \emph{losing coalitions}
(noted by $\cL$).
Moreover, we say that a coalition is minimal winning (maximal losing) if it is a winning
(losing) coalition all of whose proper subsets (supersets)
are losing (winning). Because of monotonicity, any simple game is completely
determined by its set of minimal winning (maximal losing) coalitions
denoted by $\cWm$ ($\cLM$).
Note that a description of a simple game $\Gamma$ can be given by $(N,\cX)$, where
$\cX$ is $\cW$, $\cL$, $\cWm$ or $\cLM$, see~\cite{TaZw99}.
We focus on the process of exchanging or trading where a motivating example is the following:

\begin{example}
Consider two English football clubs that are in
trouble and in danger of leaving Premier League.
Maybe the two clubs could trade with each other
and exchange players so they both could avoid relegation.
We consider the complexity of figuring out if such an exchange is
possible for various ways of knowing what it takes to form a strong
team that is able to stay in Premier League.
This can be viewed as a simple game where a winning coalition
corresponds to a strong team of players.
\end{example}

The considered property is the so called $j$-trade property for  simple games.
Loosely speaking, a simple game is $j$-trade if it is possible to rearrange
the players in a set of $j$ winning (losing) coalitions into a set of $j$ losing (winning)
coalitions, in such a way that the total number of occurrences of each player is the same in both sets.
 Thus, it is possible to go from one set to the other via participant trades.
This notion was
introduced by Taylor and Zwicker~\cite{TaZw99}
in order to obtain a characterization of the weighted games, a subfamily of simple games.
Recall that any simple game can be expressed as the intersection of weighted simple games.
This leads to the definition of  the \emph{dimension} concept, the minimum number of required
weighted games whose intersection represents the simple game~\cite{DeWo06,FrPu08,Fre04}.
 Due to this fact, the problem of deciding whether a simple game is weighted has been of interest in  several contexts.
With respect to tradeness, it is  known that a simple game is weighted if and only if it is not $j$-trade
for any non-negative integer $j$~\cite{TaZw99}.
Freixas \emph{et al.}~\cite{FMOS11} studied the computational
complexity of deciding  whether a simple game is weighted among other decision problems for simple games.
In particular,  they showed that
deciding whether a simple  game is weighted is polynomial time solvable
when the game is given by an explicit listing of one of the families $\cW$, $\cL$, $\cWm$, $\cLM$.
On the other hand, the $j$-trade concept was also
redefined as $j$-invariant-trade of simple games~\cite{FrMo09}
and extended as $(j,k)$-simple games~\cite{FrZw03}.

Here we provide a definition of $j$-trade that uses a formalism that differ from the
classic one for $j$-trade \emph{robustness} applied to a simple game
(see~\cite{CaFr96,TaZw99,FrMo09}) in order to ease the proofs of our new results.

\begin{definition}
\label{def:j-trad-appl}
Given a simple game $\Gamma$, a \emph{$j$-trade application} is a set
of $2j$ coalitions $(S_1,\ldots,S_{2j})$ such that
$\exists I\subseteq\{1,\ldots,2j\}$ that satisfies:
\begin{enumerate}
\item{}$|I|=j$
\item{}$\forall i \in\{1,\ldots,2j\}$, $S_i\in\cW\iff i\in{}I$
\item{}$\forall p \in{}N$, $|\{i\in{}I\,:\,p\in{}S_i\}| = |\{i \in\{1,\ldots,2j\}\setminus{}I\,:\,p\in{}S_i\}|$
\end{enumerate}
\end{definition}


\begin{definition}
\label{def:j-trad}
A simple game $\Gamma$ is \emph{$j$-trade} if it admits a $j$-trade application.
\end{definition}

\begin{example}
The simple game defined by $(N,\cWm)=(\{1,2,3,4\}, \{\{1,3\},\{2,4\}\})$ is
$2$-trade because it admits a $2$-trade application.
For instance,
we can consider the following set of coalitions $(\{1,3\},\{2,4\},\{1,2\},\{3,4\})$
where $\{1,3\},\{2,4\}\in\cW$, but $\{1,2\},\{3,4\}\in\cL$.
\end{example}

\begin{example}
It is easy to generate a simple game that will be $2j$-trade, for an integer $j$.
For instance, we can take the simple game $(N,\cWm)$ where
$N=\{1,\ldots,2j\}$ and $\cWm=\{\{i,i+1\} \,|\, i\in{1,3,5,\ldots,2j-1}\}$.
It is clear that coalitions $L_i=\{i,i+1\}$, for all $i\in\{2,4,6,\ldots,2j-2\}$,
and $L_ {2j}=\{1,2j\}$ are losing.
Thus, the set of $2j$ coalitions
$\cWm\cup\left(\cup_{i=1}^j L_{2i}\right)$ generates a $j$-trade application.
\end{example}

\begin{definition}
\label{def:j-trad-robust}
A simple game $\Gamma$ is \emph{$j$-trade robust} if it is not \emph{$j$-trade}.
\end{definition}


Before formally defining the \emph{decision problems} we focus on,
we consider two functions $\alpha$ and $\beta$ associating games with
various types of sets of coalitions.
The allowed types are the following
$\alpha(\Gamma)\in\{\cW,\cL,\cWm,\cLM\}$
and $\beta(\Gamma)\in\{\cW,\cL\}$, respectively.
Moreover, given the  $\beta$ application
we consider the function $\overline{\beta}$ that
provides the \emph{complementary} type with respect
to the function $\beta$.

\[
 \overline{\beta}(\Gamma) =\left\{
\begin{array}{ll}
                \cW, & \textit{if } \beta(\Gamma)=\cL\\
                \cL, & \textit{if } \beta(\Gamma)=\cW\\
               \end{array}
        \right.
\]

Now we can state the definition of the considered computational problems, observe that the value of $\alpha$ provides the type of coalitions used in the representation of the input game while the $\beta$ function indicates the type of the coalitions to be exchanged.

\begin{definition}
The \emph{\TR{\alpha}{\beta}{j}} problem, where $j\in{}\mathbb{N}$, is
\begin{description}
\item{\tt Input:} A simple game $\Gamma$ given by $(N,\alpha(\Gamma))$  and $j$ coalitions $S_1,\dots,S_j\in{}\beta(\Gamma)$.
\item{\tt Question:} Do there exist $S_{j+1},\ldots,S_{2j}\in\overline{\beta}(\Gamma)$ such that
$(S_1, \ldots, S_{2j})$ is a $j$-trade application?
\end{description}
\end{definition}

\begin{definition}
The \emph{\OTR{\alpha}{\beta}} problem is the \TR{\alpha}{\beta}{2} problem.
\end{definition}

In the remaining part of the paper we analyze the computational complexity of the above problems. Table~\ref{tab:sum} summarizes all results about the \emph{\OTR{\alpha}{\beta}} problem.
We present first the results for the \OTR{\alpha}{\beta} problem and then the results for the general case. We finalize with some conclusions and open problems.
\begin{table}[t]
\centering
\begin{tabular}{|c|c|}
\hline
 $\alpha(\Gamma)\setminus\beta(\Gamma)$ & $\cW$\\
\hline
$\cW$                                   & {polynomial}\\
$\cW^m$                                 & \NP-complete\\
$\cL$                                   & {polynomial}\\
$\cL^M$                                 & {polynomial}\\
\hline
\end{tabular}

\medskip

\begin{tabular}{|c|c|}
\hline
 $\alpha(\Gamma)\setminus\beta(\Gamma)$ & $\cL$ \\
\hline
$\cW$                                   & {polynomial}\\
$\cW^m$                                 & {polynomial}\\
$\cL$                                   & {polynomial}\\
$\cL^M$                                 & \NP-complete\\
\hline
\end{tabular}
\label{tab:sum}
\caption{Complexity for the \OTR{\alpha}{\beta} problem, where \texttt{polynomial} means polynomially time solvable.}
\label{tab:sum}
\end{table}

\section{The computational complexity of trading two given coalitions}

We present  first the types for which the \OTR{\alpha}{\beta} problems are polynomial time solvable.
\begin{proposition}
\label{prop:1-2-3}
The \OTR{\alpha}{\beta} problem is polynomially time solvable when $\alpha(\Gamma)\in\{\cW, \cWm, \cL\}$
and $\beta(\Gamma)=\cL$.
\end{proposition}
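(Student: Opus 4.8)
The plan is to reduce all three cases to a single combinatorial question and then solve that question separately for each representation. Since $\beta(\Gamma)=\cL$, the given coalitions $S_1,S_2$ are losing, so in any $2$-trade application $(S_1,S_2,S_3,S_4)$ the set $I$ of Definition~\ref{def:j-trad-appl} must be $\{3,4\}$, and the third condition there says exactly that the multisets $S_1\uplus S_2$ and $S_3\uplus S_4$ agree. Writing $A=S_1\cap S_2$ and $B=S_1\triangle S_2$, this is equivalent to $S_3\cap S_4=A$ and $S_3\cup S_4=A\cup B$, that is, $S_3=A\cup B_3$ and $S_4=A\cup B_4$ for an ordered partition $B=B_3\sqcup B_4$; conversely, any such partition with $A\cup B_3\in\cW$ and $A\cup B_4\in\cW$ gives, together with $S_1,S_2$, a valid $2$-trade application. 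So the problem becomes: does $B$ admit a $2$-coloring $(B_3,B_4)$ with both $A\cup B_3$ and $A\cup B_4$ winning?

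For $\alpha(\Gamma)=\cW$ this is immediate. Enumerate every $S_3\in\cW$ with $A\subseteq S_3\subseteq A\cup B$ (a sublist of the input), let $S_4=A\cup(B\setminus S_3)$ be the forced partner, and accept if some such $S_4$ lies in $\cW$. Completeness holds because in any valid partition $S_3=A\cup B_3$ is winning and so appears in the list. For $\alpha(\Gamma)=\cWm$ I would first observe that one may take $S_3=A\cup W$ with $W\in\cWm$ and $W\subseteq A\cup B$: starting from any valid partition $(B_3,B_4)$, pick a minimal winning $W\subseteq A\cup B_3$ and replace $B_3$ by $W\setminus A$; the new $S_3=A\cup W$ is still winning, and the new $S_4$ only grew, hence is still winning by monotonicity. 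The algorithm then ranges $W$ over the members of $\cWm$ contained in $A\cup B$ and, for each, checks whether $A\cup(B\setminus W)$ contains some element of $\cWm$; this is polynomial.

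The delicate case is $\alpha(\Gamma)=\cL$, where the colorings of $B$ may be exponentially many and $\cW$ and $\cWm$ are not available. The key is to bound the number of \emph{bad} colorings: $(B_3,B_4)$ is bad exactly when $S_3\in\cL$ or $S_4\in\cL$, and each $L$ in $\cL'=\{L\in\cL:\,A\subseteq L\subseteq A\cup B\}$ produces exactly one bad coloring via ``$S_3=L$'' (namely $B_3=L\setminus A$) and one via ``$S_4=L$'' (namely $B_3=B\setminus(L\setminus A)$). Hence at most $2|\cL'|\le 2|\cL|$ colorings are bad, and the list ${\cal B}$ of them can be produced explicitly from $\cL'$. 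The algorithm computes ${\cal B}$; if $|{\cal B}|<2^{|B|}$ it accepts, since some coloring is then good; otherwise $2^{|B|}\le 2|\cL|$, so it enumerates all $2^{|B|}$ colorings of $B$ and tests each by checking $S_3\notin\cL$ and $S_4\notin\cL$. Either way the running time is polynomial.

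The reduction in the first paragraph and the $\cW$ case should be routine. The steps carrying the argument are the ``shrink $S_3$ to a minimal winning coalition'' trick for $\cWm$ and, mainly, the bad-coloring count for $\cL$; the latter is where the bookkeeping — and the dichotomy ``either there are few colorings in total, or almost all of them are good'' — needs the most care. This abundance is specific to having the losing coalitions listed explicitly; for the compact representation $\cLM$ it is unavailable, consistently with the $\cLM$ case turning out to be \NP-complete.
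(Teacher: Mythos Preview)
Your argument is correct. The opening reduction to ``find a bipartition $(B_3,B_4)$ of $B=S_1\triangle S_2$ with both $A\cup B_3$ and $A\cup B_4$ winning'' is sound, and your treatment of the $\cW$ and $\cWm$ cases is essentially the paper's: the paper simply tries all pairs from $\cW$ (respectively from $\cWm$, checking that the per-player occurrence counts are dominated by those in $S_1,S_2$ and then padding); your version is the same enumeration with the observation that once one side is fixed the other is forced, and that it suffices to shrink only one side to a minimal winning coalition.

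Where you genuinely diverge is the $\alpha(\Gamma)=\cL$ case. The paper does not argue directly: it invokes the polynomial-time conversion from $\cL$ to $\cWm$ established in~\cite{FMOS11} and then falls back on the $\cWm$ algorithm. Your counting argument --- each $L\in\cL$ with $A\subseteq L\subseteq A\cup B$ kills at most two colorings of $B$, so either $2^{|B|}\le 2|\cL|$ and brute force is cheap, or some coloring survives --- is self-contained and avoids that external dependency. The trade-off is that the paper's route is a one-line citation and immediately yields a witness, whereas your route needs a few extra lines to actually \emph{produce} a good coloring when $2^{|B|}$ is large (e.g., scan the first $2|\cL|+1$ subsets of $B$ and test each; at least one must be good). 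For the decision problem as stated, your proof is complete as written.
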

\begin{proof} We analyze each case separately.
Let $S_1,S_2$ be two coalitions and assume that both are of type $\beta(\Gamma)=\cL$.
\begin{itemize}
\item[$\bullet$]{} Case $\alpha(\Gamma)=\cW$. Observe that we only need  to check whether there are
two coalitions $S_3, S_4\in\cW$ such that $(S_1, S_2, S_3, S_4)$ is  a $2$-trade application.
This property can be trivially checked in polynomial time by considering all the pairs
of coalitions in $\cW$.
Therefore in polynomial time with respect to the input size.

\item[$\bullet$]{}Case $\alpha(\Gamma)=\cWm$. The algorithm  is the following.
First, we look for the existence of  two coalitions $S_3,\,S_4 \in\cWm$ such that, $\forall p\in N$,
$|\{i\in\{3,4\}: p\in S_i\}| \le |\{i\in\{1,2\}: p\in S_i\}|$. Observe that if such a pair of coalitions exists we can  add the missing players (if any) in such a way that, $\forall p\in N$,
$|\{i\in\{3,4\}: p\in S_i\}| = |\{i\in\{1,2\}: p\in S_i\}|$ and obtain a 2-trade application.

\item[$\bullet$]{}Case $\alpha(\Gamma)=\cL$.  Now we compute $\cW^m$ from $\cL$ using the polynomial time algorithm  shown in~\cite{FMOS11}
and reduce the problem to the previous case.
\end{itemize}
\end{proof}

The same result can be proven when $\beta(\Gamma)=\cW$.

\begin{proposition}
\label{prop:1s-2s-3s}
The \OTR{\alpha}{\beta} problem is polynomially time solvable when $\alpha(\Gamma)\in\{\cL, \cLM, \cW\}$
and $\beta(\Gamma)=\cW$.
\end{proposition}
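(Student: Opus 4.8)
The plan is to mirror the proof of Proposition~\ref{prop:1-2-3}, exchanging the roles of winning and losing coalitions. Fix the two input winning coalitions $S_1,S_2\in\cW$, and for every player $p$ write $c(p)=|\{i\in\{1,2\}:p\in S_i\}|\in\{0,1,2\}$. A valid answer is a pair $S_3,S_4\in\cL$ with $|\{i\in\{3,4\}:p\in S_i\}|=c(p)$ for all $p\in N$; the remaining conditions of Definition~\ref{def:j-trad-appl} are then automatic with $I=\{1,2\}$. So in every case the task reduces to producing (or refuting the existence of) such a pair of losing coalitions.

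First I would treat $\alpha(\Gamma)=\cL$: the list $\cL$ is part of the input, so it suffices to run over all $O(|\cL|^2)$ pairs $S_3,S_4\in\cL$ and test, for each of them, whether the count condition above holds; this is clearly polynomial in the input size. Next I would handle $\alpha(\Gamma)=\cLM$. Here I would search, over all pairs $M_3,M_4\in\cLM$ (repetitions allowed), for one that \emph{dominates} the input, i.e.\ satisfies $|\{i\in\{3,4\}:p\in M_i\}|\ge c(p)$ for every $p\in N$; this takes $O(|\cLM|^2\,|N|)$ time. If such a pair is found, one obtains $S_3\subseteq M_3$ and $S_4\subseteq M_4$ meeting the count condition with equality, by keeping $p$ in both of $M_3,M_4$ when $c(p)=2$, in exactly one of them when $c(p)=1$, and in neither when $c(p)=0$; since a subset of a losing coalition is losing (this is precisely the monotonicity of $\cW$ read on the losing side), $S_3,S_4\in\cL$ as required. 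Conversely, if a valid pair $S_3,S_4\in\cL$ exists, then taking any $M_3,M_4\in\cLM$ with $S_3\subseteq M_3$ and $S_4\subseteq M_4$ gives a dominating pair, so the search is complete; hence the $\cLM$ case is polynomial.

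Finally, for $\alpha(\Gamma)=\cW$ I would compute $\cLM$ from the explicit list $\cW$ using the polynomial-time conversion of~\cite{FMOS11} --- the maximal losing coalitions are exactly those sets of the form $W\setminus\{p\}$, with $W\in\cW$ and $p\in W$, that happen to be maximal losing, which in particular shows $|\cLM|\le|N|\cdot|\cW|$ --- and then invoke the $\cLM$ case.

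The step I expect to be the crux is the $\cW$ case, and not for a computational reason: the subtlety is that one must pass through $\cLM$ rather than through $\cWm$. Working directly from $\cWm$ (which is trivially obtained from $\cW$) would reduce the question to $2$-colouring the set $S_1\triangle S_2$ so that no minimal winning coalition contained in $S_1\cup S_2$ becomes monochromatic --- a \SetS-type problem that we do not expect to be tractable --- whereas on the $\cLM$ side the ``dominate, then trim'' argument makes the problem immediate. The only other point requiring care is to keep the direction of monotonicity straight when certifying that the trimmed $S_3,S_4$ are losing: losing coalitions are closed under taking subsets, not supersets.
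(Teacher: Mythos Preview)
Your proof is correct and follows essentially the same route as the paper: brute-force over pairs for $\alpha(\Gamma)=\cL$, the ``find a dominating pair in $\cLM$, then trim'' argument for $\alpha(\Gamma)=\cLM$, and reduction to the $\cLM$ case via the conversion of~\cite{FMOS11} for $\alpha(\Gamma)=\cW$. You supply more detail than the paper does (the explicit converse in the $\cLM$ case and the concrete description of $\cLM$ as maximal elements among the $W\setminus\{p\}$), but the underlying strategy is identical.
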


\begin{proof} Arguments are symmetric to Proposition~\ref{prop:1-2-3}. Let $S_1,S_2$ be two coalitions and assume that both are of type $\beta(\Gamma)=\cW$.

\begin{itemize}
\item[$\bullet$]{} Case $\alpha(\Gamma)=\cL$. Here it is enough to check all pairs of losing coalitions.
The reasoning is symmetric to the first case of Proposition~\ref{prop:1-2-3}.
\item[$\bullet$]{} Case $\alpha(\Gamma)=\cLM$. The algorithm is symmetric to the second case of Proposition~\ref{prop:1-2-3}. We check whether there are two maximal loosing coalitions $S_3$ and $S_4$ so that, $\forall p\in N$,
$|\{i\in\{3,4\}: p\in S_i\}| \ge |\{i\in\{1,2\}: p\in S_i\}|$. If this is the case, by removing the additional players we get a 2-trade application.

\item[$\bullet$]{}If $\alpha(\Gamma)=\cW$, we compute $\cL^M$ from $\cW$ using the polynomial time algorithm given in~\cite{FMOS11}
and use the algorithm for the previous case.
\end{itemize}
\end{proof}

In the following results we isolate the types giving rise to computationally hard cases.

\begin{proposition}
\label{prop:2trade}
The \OTR{\alpha}{\beta} problem is \NP-complete when $\alpha(\Gamma)=\cLM$
and $\beta(\Gamma)=\cL$.
\end{proposition}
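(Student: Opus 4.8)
The plan is to prove membership in \NP{} (which is easy) and then \NP-hardness by a reduction from the set splitting problem \SetS{}: given a ground set $X$ and a family $\mathcal F$ of subsets of $X$, is there a $2$-colouring of $X$ under which no member of $\mathcal F$ is monochromatic? This problem is \NP-complete. For membership, a certificate is just the pair $S_3,S_4$: checking $S_1,S_2\in\cL$ and $S_3,S_4\in\cW$ only needs testing containment against every coalition of $\cLM$, and verifying condition~(3) of Definition~\ref{def:j-trad-appl} is equivalent to checking $S_1\cup S_2=S_3\cup S_4$ and $S_1\cap S_2=S_3\cap S_4$; all of this is polynomial in the input size.

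The structural observation behind the hardness is the following. Write $C=S_1\cap S_2$ and $D=S_1\triangle S_2$. Since $S_1,S_2\in\cL$, in any $2$-trade application $(S_1,S_2,S_3,S_4)$ the set $I$ must be $\{3,4\}$, and then condition~(3) forces $S_3=C\cup D_3$ and $S_4=C\cup D_4$ for some partition $D=D_3\sqcup D_4$. Moreover, when a game is given by $\cLM$, a coalition $T$ is winning iff $T\not\subseteq M$ for every $M\in\cLM$. Hence the given instance is a yes-instance iff there is a partition $D=D_3\sqcup D_4$ such that for every $M\in\cLM$ with $C\subseteq M$ neither $D_3$ nor $D_4$ is contained in $M$; equivalently, iff the $2$-colouring of $D$ with colour classes $D_3,D_4$ makes no member of the family $\{\,D\setminus M : M\in\cLM,\ C\subseteq M\,\}$ monochromatic. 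In short, once $C$ and $D$ are fixed, the problem is exactly \SetS{} on the ground set $D$ with this induced family.

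It remains to realise an arbitrary \SetS{} instance in this shape. Given $X$ and subsets $C_1,\dots,C_m\subseteq X$ — we may assume $m\ge 1$ and $|C_i|\ge 2$ for all $i$, as otherwise the \SetS{} instance is trivially decided — introduce fresh elements $b_1,b_2$, $e_1,\dots,e_m$, $f_1,f_2$, put $N=X\cup\{b_1,b_2\}\cup\{e_1,\dots,e_m\}\cup\{f_1,f_2\}$, $D=X\cup\{b_1,b_2\}$, and set
\[
\cLM=\bigl\{\,(D\setminus C_i)\cup\{e_i\}\ :\ 1\le i\le m\,\bigr\}\ \cup\ \bigl\{\,X\cup\{f_1\},\ \{b_1,b_2,f_2\}\,\bigr\},
\]
together with $S_1=X$ and $S_2=\{b_1,b_2\}$. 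The private elements $e_i,f_1,f_2$ prevent any containment among these coalitions, so $\cLM$ is an antichain of proper subsets of $N$ and hence defines a simple game; $S_1$ and $S_2$ are losing, being contained in $X\cup\{f_1\}$ and $\{b_1,b_2,f_2\}$ respectively; and $C=S_1\cap S_2=\emptyset$, so every $M\in\cLM$ contains $C$ and the induced family is exactly $\{C_1,\dots,C_m\}\cup\{\{b_1,b_2\},X\}$. Finally, a $2$-colouring of $D$ makes no member of this enlarged family monochromatic iff its restriction to $X$ does the same for $C_1,\dots,C_m$: one direction is restriction, and for the other a splitting colouring of $X$ is already non-constant on $X$ (since $|C_1|\ge 2$ forces $C_1$, hence $X$, to be bichromatic) and is extended by giving $b_1,b_2$ opposite colours. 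This yields the reduction.

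The step I expect to be the main obstacle is this last one: keeping $S_1$ and $S_2$ genuinely losing while at the same time controlling exactly which maximal losing coalitions contain $C$, so that the induced set-splitting family is precisely the one we want. Choosing $C=\emptyset$ (so that every maximal losing coalition contributes) and adding the two tailor-made coalitions $X\cup\{f_1\}$ and $\{b_1,b_2,f_2\}$ — whose only effect is to certify $S_1,S_2\in\cL$ and to impose two harmless extra monochromaticity constraints — is what makes the construction go through.
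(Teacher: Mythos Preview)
Your proof is correct, but it takes a genuinely different route from the paper's. The paper reduces from \SAT: the players are the literals together with two extra players $a,b$; a coalition wins if either it contains $a$ and hits every clause, or it contains $b$ and covers every variable; the authors then enumerate $\cLM$ explicitly from this description and take $S_1=\{a,b\}$, $S_2=X$. Satisfiability of $\phi$ corresponds to being able to split $X$ into a satisfying assignment (put with $a$) and its complement (put with $b$).

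You instead reduce directly from \SetS. Your structural observation --- that the \OTR{\cLM}{\cL} instance is equivalent to splitting the family $\{D\setminus M:\ M\in\cLM,\ C\subseteq M\}$ over the ground set $D=S_1\triangle S_2$ --- is exactly the reduction the paper uses \emph{in the opposite direction} for its FPT result (their Proposition following Proposition~\ref{prop:4s}). You then engineer an antichain $\cLM$ so that this family is the given \SetS{} instance plus two harmless extra sets. What this buys you is a cleaner, more transparent construction: $\cLM$ is written down directly as an antichain rather than derived from a semantic winning condition, and the correctness argument is a two-line bijection. The paper's \SAT{} reduction, on the other hand, has the advantage of extending more naturally to the $j>2$ case treated in Proposition~\ref{prop:jtrade}, since one can pad with extra independent pairs $\{c_{1i},c_{2i}\}$; your \SetS{} gadget would need an analogous padding argument there.
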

\begin{proof}
  The considered \OTR{\alpha}{\beta} problem is easily seen to be a member of \NP. We
  show that it is also \NP-hard  providing a reduction from  the \SAT{} problem.
  Recall that the \SAT{} asks  whether a given boolean
  formula $\phi$ given in conjunctive normal form is satisfiable or not. The
  \SAT{} problem is a famous \NP-complete problem~\cite{GaJo99}. We let
  $X = \{ x_1, \neg x_1, x_2, \neg x_2, \ldots , x_n, \neg x_n \}$
  be  the literals of $\phi$ and let $X_i$ be  the set of  literals in the
  $i$'th clause of $\phi$.  Let $m$ denote the number of clauses
  of $\phi$. Our reduction transforms  $\phi$ into an equivalent
  instance of the considered  \OTR{\alpha}{\beta} problem in polynomial time.

  The set of players of the associated game $\Gamma=\Gamma(\phi)$ contains the literals and
  two extra players $a$ and $b$: $N = X \cup \{a, b\}$. A set of
  players $Y$ can win if and only if at least one of the following two
  conditions are met:
\begin{equation}
\label{eq:satisfy_phi}
a \in Y \wedge \forall i = \{ 1, 2, \ldots, m \}: Y \cap X_i \neq \emptyset
\end{equation}
\begin{equation}
\label{eq:all_xes_covered}
b \in Y \wedge \forall j = \{ 1, 2, \ldots, n \}: x_j \in Y \vee \neg x_j \in Y
\end{equation}
It is not hard to see that this is indeed a simple game since any
superset of a winning set is also winning. We now have to show how to
construct the set of maximal loosing coalitions $L^M$ for this game in
polynomial time. A set of players $S$ is loosing if and only
if~(\ref{eq:satisfy_phi}) and~(\ref{eq:all_xes_covered}) are both
violated. This happens if and only if at least one of the following
four conditions are met:
\begin{equation}
\label{eq:loosing_1}
S \subseteq N \setminus \{a, b\}
\end{equation}
\begin{equation}
\label{eq:loosing_2}
\exists i: S \subseteq N \setminus (X_i \cup \{ b\})
\end{equation}
\begin{equation}
\label{eq:loosing_3}
\exists j: S \subseteq N \setminus \{ a, x_j, \neg x_j\}
\end{equation}
\begin{equation}
\label{eq:loosing_4}
\exists i, j: S \subseteq N \setminus (X_i \cup \{x_j, \neg x_j\}) \enspace
\end{equation}
If we consider all possible combinations of $i$ and $j$ then the sets
on the right hand side of these expressions form a set of loosing sets.
Any loosing set is contained in at least one of those sets.
If we pick the maximal sets of this family -- which
can be done in polynomial time -- we get $\cLM$ for the game
$\Gamma$. The sets $S_1$ and $S_2$ are constructed as follows:
$S_1 = \{a, b\}$ and $S_2 = X$.

Now assume that $\phi$ is a yes-instance to SAT. Let $S_3$ be the set
formed by the player $a$ and all literals corresponding to a
truth-assignment satisfying $\phi$ and let $S_4$ be the set formed by
the remaining literals and the player $b$. It is easy to see that coalitions
$(S_1,S_2,S_3,S_4)$ are a $2$-trade application of $\Gamma$,
where $S_1,S_2\in\cL$ and $S_3,S_4\in\cW$.


On the other hand, note that such 2-trade application
only exists if one of the winning sets contain $a$ and a set of
literals defining a truth-assignment satisfying $\phi$. Thus,
the instances to the SAT-problem and the considered
\OTR{\alpha}{\beta} problem are equivalent.
\end{proof}

Using a symmetric construction to the previous one we have.
\begin{proposition}
\label{prop:4s}
The \OTR{\alpha}{\beta} problem is \NP-complete when $\alpha(\Gamma)=\cWm$
and $\beta(\Gamma)=\cW$.
\end{proposition}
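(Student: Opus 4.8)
The plan is to dualize the reduction behind Proposition~\ref{prop:2trade} so that the roles of winning and losing coalitions get swapped. Membership in \NP{} is immediate: a certificate is a pair $S_3,S_4\in\cL$, and checking that $(S_1,S_2,S_3,S_4)$ is a $2$-trade application in the sense of Definition~\ref{def:j-trad-appl} only requires verifying that neither $S_3$ nor $S_4$ contains a coalition of $\cWm$ (equivalently, that both are losing) and that the per-player balance condition holds; both checks run in time polynomial in the size of $(N,\cWm)$.

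For \NP-hardness I would reduce from \SAT{} reusing the ingredients of the proof of Proposition~\ref{prop:2trade}: the literals $X=\{x_1,\neg x_1,\ldots,x_n,\neg x_n\}$, the clause sets $X_1,\ldots,X_m$, the extra players $a,b$, and the player set $N=X\cup\{a,b\}$. On $N$ I would take the dual of the game $\Gamma(\phi)$ of Proposition~\ref{prop:2trade}: the simple game in which a coalition $Y$ is winning if and only if at least one of $\{a,b\}\subseteq Y$, or $X_i\cup\{b\}\subseteq Y$ for some $i$, or $\{a,x_j,\neg x_j\}\subseteq Y$ for some $j$, or $X_i\cup\{x_j,\neg x_j\}\subseteq Y$ for some $i,j$ holds. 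Monotonicity is clear, so this is indeed a simple game. Its minimal winning coalitions are the minimal members of the explicit family $\{\{a,b\}\}\cup\{X_i\cup\{b\}\}_i\cup\{\{a,x_j,\neg x_j\}\}_j\cup\{X_i\cup\{x_j,\neg x_j\}\}_{i,j}$, whose elements are exactly the complements of the right-hand sides of~(\ref{eq:loosing_1})--(\ref{eq:loosing_4}); hence $\cWm$ is computable in polynomial time and serves as the representation $(N,\cWm)$ of the input game. The two distinguished coalitions are $S_1=X$ and $S_2=\{a,b\}$, the complements of the two losing coalitions used in Proposition~\ref{prop:2trade}; both are winning in the present game.

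The correctness argument is the complementwise mirror of the one for Proposition~\ref{prop:2trade}, based on the general remark that if $(T_1,\ldots,T_{2j})$ is a $j$-trade application of a game with distinguished index set $I$, then $(N\setminus T_1,\ldots,N\setminus T_{2j})$ is a $j$-trade application of its dual with distinguished set $\{1,\ldots,2j\}\setminus I$: complementation exchanges ``winning'' with ``losing in the dual'', and it turns each count $|\{i\in I:p\in T_i\}|$ into $|I|-|\{i\in I:p\in N\setminus T_i\}|$, so the balance condition survives once the two index classes are interchanged. Pushing the $2$-trade application exhibited in Proposition~\ref{prop:2trade} through this remark shows that our instance is a yes-instance precisely when that one is, hence precisely when $\phi$ is satisfiable; spelled out directly, the balance condition forces $\{S_3,S_4\}$ to be a partition of $N$ with, up to swapping, $a\in S_3$, whence $b\notin S_3$ because $S_3$ must be losing, and one checks that $S_3$ and $S_4$ are both losing exactly when $S_3$ meets every clause $X_i$ and contains no complementary pair $\{x_j,\neg x_j\}$, i.e. exactly when $S_3\setminus\{a\}$ is a consistent set of literals covering every clause of $\phi$, and such a set exists if and only if $\phi$ is satisfiable. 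I expect the only real work to be the routine check that the displayed family generates the dual game and that extracting its minimal elements is polynomial; beyond that, nothing new happens relative to Proposition~\ref{prop:2trade} once the duality of trade applications is in hand.
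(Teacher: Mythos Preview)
Your proposal is correct and matches the paper's intent: the paper proves Proposition~\ref{prop:4s} by a single sentence (``using a symmetric construction to the previous one''), and your dualization of the game from Proposition~\ref{prop:2trade}---taking complements of the maximal losing coalitions to obtain $\cWm$ and complements of $S_1,S_2$ to obtain the two winning inputs---is exactly what that sentence means. Your explicit verification that complementation carries $j$-trade applications to $j$-trade applications of the dual (with the index set $I$ swapped for its complement) is the clean way to make the symmetry rigorous, and your direct unpacking of the balance and losing conditions for $S_3,S_4$ is a nice sanity check.
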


\OMIT{Note that for all $\phi$ the game $\Gamma$ is $2$-trade? FALSE !}

We conclude this section by isolating a parameter for which one of the hard cases is fixed parameter tractable. We do so by providing a parameterized reduction to the  \SetS{} problem.
\begin{proposition}
The \OTR{\alpha}{\beta} problem, being $\alpha(\Gamma)=\cLM$
and $\beta(\Gamma)=\cL$,  is fixed parameter tractable when considering
the parameter $k = |\cLM|$.
\end{proposition}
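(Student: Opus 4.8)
The plan is to give a polynomial-time reduction mapping an instance of \OTR{\cLM}{\cL} to an instance of \SetS{} having at most $k=|\cLM|$ sets; since \SetS{} is fixed parameter tractable with respect to the number of sets, composing this reduction with an FPT algorithm for \SetS{} yields the claimed result.

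First I would record the transversal characterisation of winning coalitions. By monotonicity a coalition is losing iff it is contained in some maximal losing coalition, so, writing $\mathcal H=\{\,N\setminus L:L\in\cLM\,\}$, a coalition $S$ is winning iff $S\cap H\neq\emptyset$ for every $H\in\mathcal H$. Next, take the input coalitions $S_1,S_2\in\cL$ and set $A=S_1\cap S_2$ and $B=S_1\,\triangle\,S_2$. By condition~(3) of Definition~\ref{def:j-trad-appl}, winning coalitions $S_3,S_4$ complete a $2$-trade application $(S_1,S_2,S_3,S_4)$ precisely when $S_3\uplus S_4=S_1\uplus S_2$ as multisets, that is, when $S_3=A\cup B_3$ and $S_4=A\cup B_4$ for some partition $B=B_3\uplus B_4$ with both $A\cup B_3$ and $A\cup B_4$ transversals of $\mathcal H$. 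Since $(A\cup B_i)\cap H=(A\cap H)\cup(B_i\cap H)$, every $H$ meeting $A$ is automatically hit, and the only genuine constraints come from the $H\in\mathcal H$ disjoint from $A$: for each such $H$ we need both $B_3$ and $B_4$ to meet $H\cap B$. Hence the instance is a yes-instance iff the ground set $B$ admits a $2$-coloring splitting every member of
\[
\mathcal C=\bigl\{\,(N\setminus L)\cap B \;:\; L\in\cLM,\ (N\setminus L)\cap A=\emptyset\,\bigr\},
\]
which is exactly a \SetS{} instance; it is computable in polynomial time and $|\mathcal C|\le|\cLM|=k$, so this is a parameterized reduction.

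To finish I would invoke that \SetS{} is fixed parameter tractable when parameterized by the number of sets. One concrete way to see this: bucket the elements of the ground set according to which members of $\mathcal C$ contain them; elements lying in no set may be deleted, there are at most $2^{k}$ remaining buckets, and any bucket of size at least two may be shrunk to exactly two elements without affecting solvability, since two elements already realize the only three useful behaviours of a bucket (contributing, to the sets it meets, only color~$0$, only color~$1$, or both). This produces a kernel with at most $2^{k+1}$ elements and at most $k$ sets, on which all colorings are tried. The main obstacle is establishing correctness of the reduction: isolating the rigid part $A=S_1\cap S_2$ that must occur in both $S_3$ and $S_4$ from the free part $B=S_1\triangle S_2$ that is to be split, and verifying that $\mathcal C$ captures exactly the winning requirement on $S_3$ and $S_4$ — including the degenerate cases $B=\emptyset$ (where $\mathcal C$ contains an unsplittable empty set, precisely because $S_1$ is losing) and $A=\emptyset$ (where every $L\in\cLM$ contributes a constraint).
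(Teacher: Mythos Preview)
Your proof is correct and follows essentially the same route as the paper: reduce to \SetS{} by forming, for each $L\in\cLM$ with $S_1\cap S_2\subseteq L$, the set $(S_1\cup S_2)\setminus L$ (which coincides with your $(N\setminus L)\cap B$ once one observes $A\subseteq L$), yielding a family of at most $|\cLM|$ sets whose simultaneous splitting is equivalent to the existence of the desired $S_3,S_4$. The only differences are cosmetic --- you take the symmetric difference $B$ as ground set and supply your own kernelization argument for \SetS{}, whereas the paper takes $S_1\cup S_2$ and cites Lokshtanov--Saurabh for the FPT step.
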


\begin{proof}
Recall that according to \cite{GaJo99} the \SetS{} problem is the following \emph{decision problem}:
Given a family $\cF$ of subsets of a finite set $U$ and an integer $k$, decide whether
there exists a partition of $U$ into two subsets $U_1$ and $U_2$ such
that at least $k$ elements of $\cF$ are split by this partition.
Lokshtanov and Saurabh~\cite{LoSa09} show that the \SetS{} problem is fixed
parameter tractable when the parameter is the integer $k$.

Now we provide a fixed parameter reduction from our case of the \OTR{\alpha}{\beta} problem to the  \SetS{} problem.

Given a simple game $(N, \cL^M)$ and $S_1, S_2 \in \cL$. Let
$Z_i = (S_1 \cup S_2) \setminus L_i$ for $L_i \in \cL^M$ such that
$S_1 \cap S_2 \subseteq L_i$. Let $\mathcal{F}$ be the family of sets consisting
of all the $Z_i$'s. We construct the input to the \SetS{} problem
given by $U=S_1\cup{}S_2$, $\mathcal{F}$ and $k=|\mathcal{F}|$.

Notice that there must be at least one member in $\cF$ since $S_1$
and $S_2$ are loosing coalitions.
Moreover, the number of sets in $\cF$ is always less than or equal to
 $|\cL^M|$.

Now we prove the following claim: Given a simple game $(N, \cL^M)$ and $S_1, S_2 \in \cL$,
there exists $S_3, S_4 \in\cW$ where $(S_1, S_2, S_3, S_4)$ is a
$2$-trade application if and only if $\mathcal{F}$ is a yes-instance to the \SetS{}
problem.

  \begin{bf}If:\end{bf} Now assume that $\cF$ is a
  yes-instance to the \SetS{} problem with the sets $U_1$ and
  $U_2$ splitting the members of $\cF$. We will now prove that
  the sets $S_3=U_1 \cup ( S_1 \cap S_2 )$ and $S_4=U_2
  \cup ( S_1 \cap S_2 )$ are winning coalitions. Now consider an
  arbitrary member $L_i\in\cL^M$. If $S_1 \cap S_2 \subseteq L_i$
  then $U_1 \cap Z_i \neq \emptyset$ implying $S_3 \not
  \subseteq L_i$. If $S_1 \cap S_2 \not \subseteq L_i$ then we also
  have $S_3 \not \subseteq L_i$. This holds for any $L_i \in
  \cL^M$ implying that $S_3$ is winning. The same goes for
  $S_4$.
  Note that $(U_1, U_2)$ is a partition of $S_1 \cup S_2$, so
  every player appears the same number of times in $S_3$ and $S_4$
  as in $S_1$ and $S_2$.

  \begin{bf}Only if:\end{bf} Assume that we have a
 $2$-trade application $(S_1, S_2, S_3, S_4)$ where
 $S_3,S_4\in\cW$ and $S_1,S_2\in\cL$. Let $R = \{L \in
  \cL^M: S_1 \cap S_2 \subseteq L\}$ and $U_i = S_i \setminus (\cap_{L
    \in R} L)$ for $i \in \{3, 4\}$. It is not hard to see that $U_3$
  and $U_4$ are disjoint. What remains is to show that $U_3$ and
  $U_4$ split all the members of $\cF$.

Now consider an arbitrarily chosen set $Z_i$. The coalition
$S_3$ is winning. Therefore  it contains at least one player that
is not a member of $L_i\in\cL^M$. This player is a member of $U_3$
and it is also a member of $Z_i$, so $U_3 \cap Z_i \neq \emptyset$.
Finally, the same argument can be used for $U_4$.

Using the FPT Algorithm for the \SetS{} problem with complexity
$f(k)+ p(n)$, where $k=|\cF|$ and $n=|N|$, for some function
$f$ and a polynomial $p$,
we get that the total complexity of the composed algorithm is
$f(|\cLM|)+  p(n) + O( n |\cLM|)$ as the time of computing the associated
instance is $O(n |\cLM|)$. Thus, the claim follows.
\end{proof}

\section{The computational complexity of trading $j$ coalitions}

Note that it is enough to check combinations of $j$-coalitions in  $\cL^M$ or  $\cW^m$ to seek for a  $j$-trade application.
This is so,
because, if needed,  we can remove players from a maximal losing coalition or  add  players
to a minimal wining coalition getting loosing or  winning coalitions that matches the requirements for the $j$-trade application.

\begin{theorem}\label{th:j-trade}
For a fixed $j$, given a simple game $\Gamma=(N,\cW)$, we can decide
whether such a simple game is $j$-trade in polynomial time. Furthermore,  if the game is $j$-trade,
a  $j$-trade application  can be \emph{efficiently computed}.
\end{theorem}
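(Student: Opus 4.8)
The plan is to reduce the existence of a $j$-trade application to a bounded search over minimal winning and maximal losing coalitions, following the remark preceding the statement. First I would compute, from the input list $\cW$, the families $\cWm$ and $\cLM$; by the conversion algorithms of~\cite{FMOS11} this takes time polynomial in the input size, so in particular $|\cWm|$ and $|\cLM|$ are polynomially bounded. Note that $\cWm\neq\emptyset$ since $N\in\cW$, and $\cLM\neq\emptyset$ since $\emptyset\notin\cW$.

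The heart of the argument is a characterization I would isolate as a lemma: $\Gamma$ is $j$-trade if and only if there exist coalitions $W_1,\dots,W_j\in\cWm$ and $L_1,\dots,L_j\in\cLM$, not necessarily distinct, such that every player lies in at most as many of the $W_i$ as of the $L_i$; in symbols, for every $p\in N$,
\[
 |\{\,i : p\in W_i\,\}|\ \le\ |\{\,i : p\in L_i\,\}| .
\]
For the forward direction I would start from a $j$-trade application $(S_1,\dots,S_{2j})$ with witness set $I$ of size $j$, replace each winning $S_i$ ($i\in I$) by some $W_i\in\cWm$ with $W_i\subseteq S_i$, and each losing $S_i$ ($i\notin I$) by some $L_i\in\cLM$ with $S_i\subseteq L_i$; then the third condition of Definition~\ref{def:j-trad-appl} gives the displayed inequality for each player. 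For the converse, given such $W_i$ and $L_i$, set $a_p=|\{i:p\in W_i\}|$, put $S_i=W_i$ for $i=1,\dots,j$, and build $S_{j+1},\dots,S_{2j}$ player by player: for each $p$ pick exactly $a_p$ of the indices $i\in\{1,\dots,j\}$ with $p\in L_i$ -- possible precisely because $a_p\le|\{i:p\in L_i\}|$ -- and place $p$ into $S_{j+i}$ for those indices only. Each $S_{j+i}$ is a subset of the losing coalition $L_i$, hence losing; each player occurs exactly $a_p$ times on each side; so $(S_1,\dots,S_{2j})$ is a $j$-trade application with witness set $\{1,\dots,j\}$.

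Granted the lemma, the algorithm is immediate: enumerate all size-$j$ multisets drawn from $\cWm$ and all size-$j$ multisets drawn from $\cLM$ -- there are $O(|\cWm|^{j})$ and $O(|\cLM|^{j})$ of them, which is polynomial for fixed $j$ -- and for each pair test the inequality for all players, in time $O(j\,|N|)$. If some pair passes, answer \emph{yes} and output the $j$-trade application reconstructed exactly as in the converse direction of the lemma, which is clearly a polynomial-time construction; otherwise answer \emph{no}.

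The step I expect to be the main obstacle is getting the characterization right -- in particular, being convinced that a single inequality per player suffices, with no further multiplicity bookkeeping. The reason is that any surplus of occurrences can be shed from the maximal losing coalitions independently across players and without modifying the winning side; one should also check that the argument survives when the same minimal winning or maximal losing coalition is reused several times in a multiset. Everything else -- invoking the conversions of~\cite{FMOS11} and performing the multiset enumeration -- is routine once the lemma is established.
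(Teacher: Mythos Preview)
Your proposal is correct and follows the same overall strategy as the paper: reduce the search for a $j$-trade application to a polynomial-size enumeration using the conversions of~\cite{FMOS11} and the observation preceding the theorem. The paper's own proof is slightly coarser: it computes only $\cLM$ from $\cW$, then enumerates $j$-tuples directly from $\cW$ (not $\cWm$) together with $j$-tuples from $\cLM$, checking for an \emph{exact} $j$-trade application; correctness rests on the claim that any $j$-trade application can be modified so that all losing coalitions are maximal losing (adding the resulting surplus players back into the winning side, which stays winning). Your version sharpens this by passing to $\cWm$ as well and replacing the equality test by the per-player inequality $|\{i:p\in W_i\}|\le|\{i:p\in L_i\}|$, with an explicit reconstruction of the witnessing application. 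This buys a smaller search space and a cleaner invariant, at the cost of proving the two-sided lemma; the paper's route avoids the lemma but searches over the (potentially much larger) full list $\cW$.
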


\begin{proof}
We start by computing $\cL^M$ from $\cW$ using the  polynomial time algorithm given in~\cite{FMOS11}.
Then, we try all  the possible combinations of $j$ members of $\cW$ and $j$ members of $\cL^M$ (repetitions are allowed)
to find a $j$-trade application. This requires  ${|\cW|}\choose{j}$ combinations.

The algorithm  works because there is such $j$-trade application if and only if there is a $j$-trade application where all losing coalitions
are maximal losing coalitions.
\end{proof}

Next we present a similar result but for the case in which  the game is given by $\Gamma=(N,\cL)$.

\begin{theorem}
For a fixed $j$, given a simple game $\Gamma=(N,\cL)$, we can decide
whether such a simple game is $j$-trade in polynomial time. Furthermore,  if the game is $j$-trade,
a  $j$-trade application  can be \emph{efficiently computed}.
\end{theorem}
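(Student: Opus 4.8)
The plan is to mirror the proof of Theorem~\ref{th:j-trade}, interchanging the roles of winning and losing coalitions throughout. Since the game is presented as $(N,\cL)$ we already have the complete list of losing coalitions but not of the winning ones, so the first step is to compute $\cWm$ from $\cL$ using the polynomial time conversion algorithm of~\cite{FMOS11}; this also guarantees that $|\cWm|$ is polynomially bounded in $|N|+|\cL|$. The algorithm then enumerates every tuple $(S_1,\dots,S_j)$ of (not necessarily distinct) coalitions taken from $\cL$ together with every tuple $(T_1,\dots,T_j)$ of (not necessarily distinct) coalitions taken from $\cWm$ --- only $O(|\cL|^{j}|\cWm|^{j})$ choices for a fixed $j$ --- and for each of them checks the \emph{domination condition}: for every player $p\in N$, $|\{i:p\in T_i\}|\le|\{i:p\in S_i\}|$. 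It accepts if and only if some choice passes.

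To establish soundness I would show that a passing choice can be padded into a genuine $j$-trade application: for each player $p$ one adds $p$ to $d_p=|\{i:p\in S_i\}|-|\{i:p\in T_i\}|$ of the coalitions $T_i$ that do not already contain it. This is always possible, since at most $j$ of the $S_i$ contain $p$ and hence $d_p$ does not exceed the number $j-|\{i:p\in T_i\}|$ of such coalitions; carrying this out independently for every player yields supersets $T_i'\supseteq T_i$ that are still winning by monotonicity and in which every player now occurs exactly as often as among $S_1,\dots,S_j$. Thus $(S_1,\dots,S_j,T_1',\dots,T_j')$ is a $j$-trade application, which also shows that a witness is produced within the claimed bound.

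For completeness I would take a $j$-trade application $(S_1,\dots,S_j,W_1,\dots,W_j)$ with $S_i\in\cL$ and $W_i\in\cW$ and replace each $W_i$ by a minimal winning coalition $T_i\subseteq W_i$ (which exists by monotonicity); then for every player $p$,
\[
|\{i:p\in T_i\}|\ \le\ |\{i:p\in W_i\}|\ =\ |\{i:p\in S_i\}|,
\]
the last equality being the balance condition of Definition~\ref{def:j-trad-appl}. So the losing tuple $(S_1,\dots,S_j)$, which is among those enumerated since every $S_i\in\cL$, together with $(T_1,\dots,T_j)$ passes the domination test, and the algorithm accepts exactly the $j$-trade games. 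The total running time is the cost of the conversion of~\cite{FMOS11} plus $O(|\cL|^{j}|\cWm|^{j}|N|)$, which is polynomial for a fixed $j$. I expect the only delicate point --- and the sole difference with Theorem~\ref{th:j-trade} --- to be getting the direction of the ``shrinking'' right: here it is the winning side that is pushed down to minimal winning coalitions, while the losing coalitions are used verbatim from the input list, so no auxiliary search over enlargements of losing coalitions is needed. This is precisely the asymmetry already noted in the remark preceding Theorem~\ref{th:j-trade}.
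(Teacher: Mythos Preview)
Your proposal is correct and follows essentially the same approach as the paper: compute $\cW^m$ from $\cL$ in polynomial time via~\cite{FMOS11} and then enumerate tuples from $\cL^j\times(\cW^m)^j$, exactly the symmetric counterpart of Theorem~\ref{th:j-trade}. The paper's own proof is a single sentence (``compute $\cW^m$ from $\cL$ in polynomial time and proceed as Theorem~\ref{th:j-trade}''); your domination--then--pad formulation makes explicit precisely the mechanism alluded to in the remark preceding Theorem~\ref{th:j-trade}, so there is no substantive difference.
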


\begin{proof}
We start by computing $\cW^m$ from $\cL$ in polynomial time~\cite{FMOS11} and proceed as Theorem~\ref{th:j-trade}.
\end{proof}

Now we adapt the result of Proposition~\ref{prop:2trade} for
the \OTR{\alpha}{\beta} problem to the \TR{\alpha}{\beta}{j}
problem
\begin{proposition}
\label{prop:jtrade}
The \TR{\alpha}{\beta}{j} problem is \NP-complete when $\alpha(\Gamma)=\cLM$
and $\beta(\Gamma)=\cL$.
\end{proposition}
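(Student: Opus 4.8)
The plan is to adapt the reduction from \SAT{} used in the proof of Proposition~\ref{prop:2trade}, extending it so that it produces a \TR{\alpha}{\beta}{j} instance rather than just an \OTR{\alpha}{\beta} instance. Membership in \NP{} is again immediate: a certificate consists of the $j$ remaining coalitions $S_{j+1},\dots,S_{2j}\in\cW$, and verifying Definition~\ref{def:j-trad-appl} takes polynomial time once we can test membership in $\cW$, which is easy given $\cLM$ (a set is winning iff it is not contained in any member of $\cLM$). So the work is in the \NP-hardness.

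First I would recall the game $\Gamma=\Gamma(\phi)$ and its maximal losing family $\cLM$ from Proposition~\ref{prop:2trade}, which are already constructed in polynomial time; the only thing that changes for fixed $j$ is which $j$ losing coalitions we hand to the algorithm. The idea is to reuse the two ``essential'' losing coalitions $S_1=\{a,b\}$ and $S_2=X$ that carry the logical content of the reduction, and to pad the instance with $j-2$ extra pairs of players that trade trivially among themselves. Concretely I would enlarge the player set by $2(j-2)$ fresh dummy players, say $d_1,d_1',\dots,d_{j-2},d_{j-2}'$, declare that no coalition containing a dummy player can change the winning/losing status (equivalently, modify the losing conditions so these dummies behave as pure ``filler''), and set $S_{2},\dots,S_{j-1}$ (or an analogous indexing) to be the losing coalitions $\{d_\ell,d_\ell'\}$. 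Each such pair can only be matched on the winning side by recombining $d_\ell$ and $d_\ell'$, which forces the corresponding winning coalitions to be supersets of $\{d_\ell\}$ and $\{d_\ell'\}$ respectively but is otherwise free; since $j$ is a fixed constant, this adds only a constant number of players and a constant number of members to $\cLM$, so the reduction stays polynomial.

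The correctness argument then factorizes: a $j$-trade application for this padded instance exists if and only if the dummy pairs can be rebalanced (always possible, and only in the one obvious way up to the non-dummy content) \emph{and} the two coalitions $S_1=\{a,b\}$, $S_2=X$ can be traded into two winning coalitions, which by the analysis in Proposition~\ref{prop:2trade} happens exactly when $\phi$ is satisfiable. One has to check the multiplicity condition (item~3 of Definition~\ref{def:j-trad-appl}) globally: the dummy players appear only in their own pairs and the original players $a,b,X$ appear only in the $S_1,S_2$ block, so the counts decouple cleanly. I would also make sure that $j=2$ is consistent with Proposition~\ref{prop:2trade} (no padding needed) and that for $j=1$ the statement is either vacuous or handled separately, since a $1$-trade application requires one winning and one losing coalition with identical player multisets, which is impossible by monotonicity — so one may assume $j\ge 2$.

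The main obstacle will be arranging the dummy gadget so that it genuinely forces a trade and does not accidentally create unintended $j$-trade applications using the original players in new ways — in particular, ensuring that the dummies cannot be used to ``absorb'' a failed satisfiability attempt. The cleanest way to rule this out is to keep the dummy players completely outside the logical part of the game (they never help a coalition win and never hurt it), so the only role a dummy pair can play in a trade application is to be split between its two natural winning partners; then the number of trades contributed by the non-dummy part is exactly $j-(j-2)=2$, and the reduction's equivalence reduces verbatim to that of Proposition~\ref{prop:2trade}. Writing this forcing argument carefully, rather than the construction itself, is where the proof needs attention.
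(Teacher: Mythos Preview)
Your high-level plan coincides with the paper's: pad the $j=2$ reduction with $2(j-2)$ extra players so that the essential trade is still the one between $\{a,b\}$ and $X$. The specific gadget you describe, however, does not work. You propose that the dummies ``never help a coalition win and never hurt it'' and take the extra losing coalitions to be the pairs $\{d_\ell,d_\ell'\}$. But then on the winning side you must exhibit $j$ winning coalitions whose combined player multiset equals that of $S_1,\dots,S_j$. In that multiset $a$ and $b$ each occur exactly once, while in $\Gamma(\phi)$ every winning coalition must contain $a$ or $b$. Hence at most two of the required $j$ winning coalitions can exist, and for $j\ge 3$ your instance is a no-instance regardless of whether $\phi$ is satisfiable. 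The ``cleanest fix'' you propose in the last paragraph --- keeping the dummies entirely outside the logical part of the game --- is precisely what causes this failure, not what prevents it.

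The paper does the opposite: for each $i\in\{3,\dots,j\}$ it adds a \emph{new} winning condition, namely $\{c_{1i},c_{2i}\}\subseteq Y$. Now $S_{j+i}=\{c_{1i},c_{2i}\}$ is a winning coalition on its own, supplying the missing $j-2$ winners. On the losing side the same $2(j-2)$ players are re-paired (as $\{c_{1i},c_{1,i+1}\}$ and $\{c_{2,i-1},c_{2i}\}$, with a small adjustment when $j$ is odd) so that no $S_i$ contains a complete pair $\{c_{1k},c_{2k}\}$ and hence each $S_i$ is losing. With this gadget the decoupling you were aiming for actually holds: the $c$-players can only be reassembled into the pairs $\{c_{1i},c_{2i}\}$ on the winning side, leaving exactly two winners to host $a$, $b$ and the literals, at which point the argument of Proposition~\ref{prop:2trade} applies verbatim. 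Since $j$ is fixed, the enlarged $\cLM$ is still computable in polynomial time.
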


\begin{proof}
The argument is quite similar to the one used in the  proof of Proposition~\ref{prop:2trade}, but
considering that the set of players of the game $\Gamma$ contains the literals and
$2+2\cdot(j-2)$ extra players, i.e.,  $a$, $b$, and $c_{1i}$ and $c_{2i}$ for
$i\in\{3,\ldots,j\}$. Thus, we have that $N = X \cup \{a, b\} \cup_{i=3}^j \{c_{1i},c_{2i}\}$. A set of
  players $Y$ can win if and only if at least one of the following
  conditions are met:
\begin{equation}
a \in Y \wedge \forall i = \{ 1, 2, \ldots, m \}: Y \cap X_i \neq \emptyset
\end{equation}
\begin{equation}
b \in Y \wedge \forall j = \{ 1, 2, \ldots, n \}: x_j \in Y \vee \neg x_j \in Y
\end{equation}
\begin{equation}
\label{eq:ci}
\{c_{1i}, c_{2i}\} \subseteq Y, \forall i\in\{3,4,\ldots,j\} \enspace .
\end{equation}

As winning coalitions we have
\[
\begin{array}{l}
 S_{j+1}=\{a, \textit{all true literals corresponding to}\\
 \qquad\qquad\textit{ a truth-assignment satisfying $\phi$}\},\\
 S_{j+2}=\{b, \textit{the remaining literals that are}\\
 \qquad\qquad\textit{ not in } S_1\},\\
 S_{j+i}=\{c_{1i}, c_{2i}\}\ ,\ \forall{}i\in\{3,4,\ldots,j\}.
\end{array}
\]
Finally,
as losing coalitions we distinguish two cases:
\begin{itemize}
\item{}If $j$ is even, 
\[
\begin{array}{l}
 S_1=X,\\
 S_2=\{a, b\},\\
 S_i=\{c_{1i}, c_{1,{i+1}}\}\ ,\ \forall{}i\in\{3,5,7,\ldots,j-1\}\\
 S_i=\{c_{2,{i-1}}, c_{2i}\}\ ,\ \forall{}i\in\{4,6,8,\ldots,j\}\ .
\end{array}
\]
\item{}If $j$ is odd 
\[
\begin{array}{l}
 S_1=X,\\
 S_2=\{a, c_{1j}\},\\
 S_i=\{c_{1i}, c_{1,{i+1}}\}\ ,\ \forall{}i\in\{3,5,7,\ldots,j-1\}\\
 S_i=\{c_{2,{i-1}}, c_{2i}\}\ ,\ \forall{}i\in\{4,6,8,\ldots,j-2\}\\
 S_{2j}=\{b, c_{2j}\}\ .
\end{array}
\]
\end{itemize}
\end{proof}

\section{Related Remaining Problems}
In this paper we have focused on the computational complexity of trade robustness problems for  simple games.
Our results are summarized in Table~\ref{tab:sum}.
Nevertheless, there remain many related open questions.
Let us highlight some of them.

Because constructing  $\cW$ or $\cL$ from $\cW^m$ or $\cL^M$, respectively,
is not polynomially solvable,
we post the following conjecture.
\begin{conjecture}
For a fixed $j$, to decide whether a game given by either $(N, \cW^m)$ or $(N, \cL^M)$
is $j$-trade is \coNP-complete.
\end{conjecture}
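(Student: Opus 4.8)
The plan is to prove both containment and hardness, after first collapsing the two representations into one. I would observe that the cases $(N,\cWm)$ and $(N,\cLM)$ are polynomially equivalent through game duality: for $\Gamma=(N,\cW)$ let $\Gamma^{*}$ be the dual game, $S\in\cW(\Gamma^{*})\iff N\setminus S\notin\cW$; complementing all coalitions gives $\cWm(\Gamma^{*})=\{N\setminus L : L\in\cLM(\Gamma)\}$, so each description is obtained from the other in linear time. Moreover $\Gamma$ is $j$-trade iff $\Gamma^{*}$ is: if $(S_{1},\dots,S_{2j})$ with witness set $I$ is a $j$-trade application of $\Gamma$, then $(N\setminus S_{1},\dots,N\setminus S_{2j})$ with witness set $\{1,\dots,2j\}\setminus I$ is one of $\Gamma^{*}$, since the membership condition flips and the balance condition is invariant under complementation because $|I|=j$. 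Hence it suffices to treat $(N,\cWm)$.

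For the easy direction, note that a $j$-trade application is a tuple of $2j$ subsets of $N$ together with an index set, an object of size $O(j\,|N|)$ and hence polynomial for fixed $j$; given $\cWm$ (or $\cLM$) one decides in polynomial time whether a coalition is winning or losing, and the cardinality and per-player balance conditions are immediate. Thus deciding the $j$-trade property is in \NP, so deciding $j$-trade robustness is in \coNP, and the substance of the conjecture is the matching hardness: the \coNP-hardness of $j$-trade robustness, equivalently the \NP-hardness of deciding whether $\Gamma$ is $j$-trade.

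For hardness I would reduce \SAT{} to ``$\Gamma$ is $j$-trade'', so that an unsatisfiable $\phi$ maps to a $j$-trade robust game. The natural candidate is the game $\Gamma(\phi)$ of Proposition~\ref{prop:jtrade}, whose minimal winning coalitions are exactly the sets $\{a\}\cup H$ with $H$ a minimal set of literals meeting every clause of $\phi$, the sets $\{b\}\cup T$ with $T$ one literal per variable, and the gadget pairs $\{c_{1i},c_{2i}\}$; this $\cWm$ is computable from $\phi$ in polynomial time. The ``if'' direction is the computation already carried out there: a satisfying assignment produces the explicit $2j$ coalitions listed in the proof of Proposition~\ref{prop:jtrade}, which one checks are winning, losing, and balanced. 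The ``only if'' direction requires showing that \emph{every} $j$-trade application of $\Gamma(\phi)$ encodes a satisfying assignment: by the remark before Theorem~\ref{th:j-trade} one may assume the $j$ winning coalitions are minimal winning; balance on the gadget players should pin down $j-2$ of them to be the pairs $\{c_{1i},c_{2i}\}$, forcing the corresponding losing coalitions; and the two remaining winning coalitions must be one of $a$-type and one of $b$-type whose literal parts, by balance together with the shape of the maximal losing coalitions~(\ref{eq:loosing_1})--(\ref{eq:loosing_4}), partition $X$ into a choice of one literal per variable that simultaneously meets every clause, i.e.\ a satisfying assignment.

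The hard part is exactly this last step, and it is why the statement is posed as a conjecture: in Propositions~\ref{prop:2trade} and~\ref{prop:jtrade} the coalitions $X$ and $\{a,b\}$ are \emph{handed to the solver}, and it is this that forces the literal parts of the winning coalitions to partition $X$; in the plain $j$-trade problem nothing is prescribed, so one must redesign $\Gamma(\phi)$ so that the only way to balance a $j$-tuple of winning coalitions against a $j$-tuple of losing ones still reproduces that partition even for unsatisfiable $\phi$ --- ruling out ``spurious'' trades that use two $a$-type coalitions, or use a clause-meeting set of literals that assigns some variable both truth values. Concretely I expect one must attach private dummy players to $a$, to $b$, and to each literal so that any $a$-type winning coalition is forced to be as large as a full one-per-variable choice, together with enough losing coalitions to absorb precisely those dummies; verifying that the resulting game admits no unintended $j$-trade application is the crux of the argument.
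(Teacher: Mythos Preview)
The statement you are addressing is not proved in the paper: it appears in the final section as an open conjecture, with no accompanying argument beyond the one-line motivation that building $\cW$ or $\cL$ from $\cWm$ or $\cLM$ is not polynomial. There is therefore no proof to compare your proposal against, and what you have written is, as you yourself say, a plan with an explicitly unfinished ``only if'' direction rather than a proof.

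Beyond the gap you already flag, your proposed reduction has a more immediate defect. You assert that $\cWm$ of the game $\Gamma(\phi)$ from Proposition~\ref{prop:jtrade} is computable from $\phi$ in polynomial time, but this is false: the $b$-type minimal winning coalitions are exactly the sets $\{b\}\cup T$ with $T$ containing one literal per variable, and there are $2^{n}$ of them; the $a$-type minimal winning coalitions are the sets $\{a\}\cup H$ with $H$ a minimal transversal of the clause hypergraph, and these too can be exponentially many. This is precisely why Propositions~\ref{prop:2trade} and~\ref{prop:jtrade} work with the $\cLM$ description, which is polynomial via the finitely many templates~(\ref{eq:loosing_1})--(\ref{eq:loosing_4}). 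Your duality argument is correct and does let you move the hardness question to the $\cLM$ side, but then the instance you produce must have polynomial $\cLM$, and you are still left with exactly the obstacle you identify at the end: without $S_1=X$ and $S_2=\{a,b\}$ being handed to the solver, $\Gamma(\phi)$ as defined admits $j$-trade applications that have nothing to do with satisfying assignments (for instance, for $j=2$ take $S_1=\{b,x_1,x_2,\dots,x_n\}$, $S_2=\{b,\neg x_1,x_2,\dots,x_n\}$ winning and $S_3=\{b,x_1,\neg x_1,x_3,\dots,x_n\}$, $S_4=\{b,x_2,x_2,\dots\}$---one can cook up balanced losing pairs using only $b$-type material regardless of $\phi$). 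Eliminating such spurious trades is the heart of the matter, and neither the paper nor your sketch resolves it; the statement remains open.
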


We recall that a simple game is weighted if and only if it is
$j$-trade robust for any non-negative integer $j$, see the
characterization given by Taylor and Zwicker~\cite{TaZw99}. This leads us to the following problem

\begin{quote}
Trade robustness problem.

\texttt{Input:} A simple game $\Gamma$ and a non-negative integer $j$.

\texttt{Question:}
Is $\Gamma$ a $j$-trade robust game?
\end{quote}
Whose computational complexity remains open for the different  forms of representations of simple games considered in this paper.

In a recent paper Molinero \emph{et al.}~\cite{MRS15}  introduced \emph{influence games}. Influence games provide a succinct form of representation for  simple games based on graphs. It would be of interest to analyze the complexity of the $(\alpha,\beta,j)$-trade and the trade robustness problems when the simple game is given as an influence game.

\bibliographystyle{plain}


\end{document}